\newdefinition{definition}{Definition}[section]
\newdefinition{remark}[definition]{Remark}
\newdefinition{example}[definition]{Example}
\newtheorem{lemma}[definition]{Lemma}
\newtheorem{theorem}[definition]{Theorem}
\newproof{proof}{Proof}
\newcommand{\BB}{\ensuremath{\mathbb{B}}}
\newcommand{\true}{\operatorname{true}}
\newcommand{\false}{\operatorname{false}}
\newcommand{\Tree}{\operatorname{Tree}}
\newcommand{\emptytree}{[]}
\newcommand{\ttt}{^{\scriptscriptstyle t}}
\newcommand{\fff}{^{\scriptscriptstyle f}}
\newcommand{\Xf}{X\fff}
\newcommand{\cons}{\operatorname{::}}
\newcommand{\Xt}{X\ttt}
\newcommand{\phit}{\phi\ttt}
\newcommand{\phif}{\phi\fff}
\newcommand{\epsilont}{\varepsilon\ttt}
\newcommand{\epsilonf}{\varepsilon\fff}
\newcommand{\Type}{\operatorname{Type}}
\newcommand{\xs}{\vec{x}}
\newcommand{\ys}{\vec{y}}
\newcommand{\emptypath}{\operatorname{\langle \rangle}}
\newcommand{\spath}{\operatorname{spath}}
\newcommand{\sigmaf}{\sigma\fff}
\newcommand{\head}{\operatorname{head}}
\newcommand{\s}{s}
\newcommand{\ssf}{\s\fff}
\newcommand{\Path}{\operatorname{\mathrm{Path}}}
\newcommand{\J}{\operatorname{J}}
\newcommand{\K}{\operatorname{K}}
\newcommand{\St}{\operatorname{S}}
\newcommand{\JJ}{\operatorname{\mathcal J}}
\newcommand{\KK}{\operatorname{\mathcal K}}
\newcommand{\Strategy}{\operatorname{Strategy}}
\newcommand{\strategy}{\operatorname{strategy}}
\newcommand{\emptystrategy}{\emptypath}
\newcommand{\Ktimes}{\operatorname{\otimes^{\tiny K}}}
\newcommand{\Jtimes}{\operatorname{\otimes^{\tiny J}}}
\newcommand{\Ksequence}{\operatorname{K{-}sequence}}
\newcommand{\Jsequence}{\operatorname{J{-}sequence}}
\journal{Theoretical Computer Science}
\begin{document}

\begin{frontmatter}

%% Title, authors and addresses

%% use the tnoteref command within \title for footnotes;
%% use the tnotetext command for theassociated footnote;
%% use the fnref command within \author or \address for footnotes;
%% use the fntext command for theassociated footnote;
%% use the corref command within \author for corresponding author footnotes;
%% use the cortext command for theassociated footnote;
%% use the ead command for the email address,
%% and the form \ead[url] for the home page:
%% \title{Title\tnoteref{label1}}
%% \tnotetext[label1]{}
%% \author{Name\corref{cor1}\fnref{label2}}
%% \ead{email address}
%% \ead[url]{home page}
%% \fntext[label2]{}
%% \cortext[cor1]{}
%% \affiliation{organization={},
%%             addressline={},
%%             city={},
%%             postcode={},
%%             state={},
%%             country={}}
%% \fntext[label3]{}

\title{Higher-order Games with Dependent Types}

%% use optional labels to link authors explicitly to addresses:
%% \author[label1,label2]{}
%% \affiliation[label1]{organization={},
%%             addressline={},
%%             city={},
%%             postcode={},
%%             state={},
%%             country={}}
%%
%% \affiliation[label2]{organization={},
%%             addressline={},
%%             city={},
%%             postcode={},
%%             state={},
%%             country={}}

\author[inst1]{Mart\'\i{n} Escard\'o}

\affiliation[inst1]{organization={School of Computer Science, University of Birmingham},%Department and Organization
            addressline={Bristol Road}, 
            city={Birmingham},
            postcode={B15 2TT}, 
            country={UK}}

\author[inst2]{Paulo Oliva}

\affiliation[inst2]{organization={School of Electronic Engineering and Computer Science, Queen Mary University of London},%Department and Organization
            addressline={Mile End Road}, 
            city={London},
            postcode={E1 4NS},
            country={UK}}
            
\nopreprintlinetrue

\begin{abstract}
In previous work on higher-order games, we accounted for finite games of unbounded length by working with continuous outcome functions, which carry implicit game trees. In this work we make such trees explicit. We use concepts from dependent type theory to capture history-dependent games, where the set of available moves at a given position in the game depends on the moves played up to that point. In particular, games are modelled by a $W$-type, which is essentially the same type used by Aczel to model constructive Zermelo-Frankel set theory (CZF). We have also implemented all our definitions, constructions, results and proofs in the dependently-typed programming language Agda, which, in particular, allows us to run concrete examples of computations of optimal strategies, that is, strategies in subgame perfect equillibrium.

% \paragraph{Original abstract, for comparison:}
% This paper generalises the notion of higher-order game, by accounting for history-dependent games where the type of available moves at a given position in the game depends on the moves played up to that point. This is achieved by modelling game trees as a certain type of \emph{dependent type trees}, which allows us to directly account for finite games of unbounded length. In previous work this was instead done by considering either a fixed game length, or games of countable length with a continuous outcome function. We make use of concepts from dependent type theory in order to formalise this, and we have also implemented all our definitions, constructions, results and proofs in the dependently-typed programming language Agda. 

\end{abstract}

\begin{keyword}
%% keywords here, in the form: keyword \sep keyword
Computational game theory \sep selection function \sep optimal strategies \sep subgame perfect equilibrium \sep dependent type theory \sep Agda.
%% PACS codes here, in the form: \PACS code \sep code
% \PACS 0000 \sep 1111
%% MSC codes here, in the form: \MSC code \sep code
%% or \MSC[2008] code \sep code (2000 is the default)
% \MSC 0000 \sep 1111
\end{keyword}

\end{frontmatter}

%% \linenumbers

% \tableofcontents

\paragraph{We dedicate this paper to Ulrich Berger on the occasion of his 65th birthday} Our use of selection functions in this paper, and in our previous work, is inspired by Ulrich's PCF definition of the Fan Functional \cite{Berger(90)}. In particular, the product of selection functions, which we use to compute optimal strategies, is a generalisation of his seminal construction of a selection function for the Cantor space for defining the Fan Functional.

%%%%%%%%%%%%%%%%%%%%%%%%%%%%%%%%%%%%%%%%%%%%%%%
%%%%%%%%%%%%%%%%%%%%%%%%%%%%%%%%%%%%%%%%%%%%%%%
\section{Introduction}
\label{sec:introduction}
%%%%%%%%%%%%%%%%%%%%%%%%%%%%%%%%%%%%%%%%%%%%%%%
%%%%%%%%%%%%%%%%%%%%%%%%%%%%%%%%%%%%%%%%%%%%%%%

Games of perfect information are normally modelled in economics and game theory via \emph{maximising} players. At the end of each play in the game all players receive a payoff, so that players pick their moves in order to maximise their final payoff. In zero-sum games, usually one considers two players and a single outcome, where one player is trying to maximise the outcome, while the other tries to minimise it.

Higher-order games \cite{EO(2010),EO(2011A)} have been introduced as a generalisation of both of these frameworks, where the outcome type of the game can have an arbitrary type $R$, and the goal of each player at a round where $X$ is the set of available moves is defined via generalised \emph{quantifiers}, which are higher-order functions of type $(X \to R) \to R$. 

Standard examples of sequential games of perfect information, such as Tic-Tac-Toe, checkers and chess, can be modelled by taking $R$ to be the linearly ordered set $\{ -1 < 0 < 1 \}$, so that the two players correspond to the quantifiers $\max, \min \colon (X \to R) \to R$ (see \cite{EO(2010D)}). This setup is rather general, in particular allowing to  treat sequential games of unbounded length \cite{EO(2012)}.

In this paper we account for history-dependent sequential games of perfect information of unbounded length, where the set of available moves at a given position in the game depends on the moves played up to that point. This is achieved by modelling game trees as a certain type of well-founded \emph{dependent type trees}. In previous work this was instead done by considering either a fixed game length, or games of countable length with a continuous outcome function \cite{EO(2010)}. In fact, every continuous function carries an implicit well-founded tree, and this goes back to Brouwer. Here we work with explicitly given well-founded trees instead. 

We make use of concepts from \emph{dependent type theory} \cite{Bove-Dybjer(2009)} in order to formalise this, and we have implemented all our definitions, results and proofs in Agda \cite{agda}, which is both a programming language and a proof assistant based on dependent type theory. 

\paragraph{Related work} Our type of trees is essentially the same as that used by Leversha~\cite{Leversha(1976)} to define ordinals, and by Aczel~\cite{Aczel(1978)} to model constructive Zermelo-Frankel set theory (CZF). The precise relationship is discussed in Section~\ref{sec:discussion}.

As in previous work~\cite{EO(2010)}, we think of backward induction in terms of a certain product of selection functions (Definition~\ref{def-sel-fct-prod} below). The product of selection functions is also a manifestation of bar recursion~\cite{EO(2015A)}, used to interpret classical proofs in analysis~\cite{EO(2010)MR2678124, EO(2010)MR2678125}, and can be used to give computational content to the Tychonoff Theorem from topology~\cite{Escardo(2008)}. See our survey paper~\cite{EO(2010D)} where the versatility of the product of selection functions is discussed in further detail. 

Several other recent papers also explore the connection between selection functions and computational interpretations of classical analysis, e.g.\ Powell \cite{Powell(MR3244667), Powell(MR3978430)}.

Ghani, Hedges, Winschel, and Zahn \cite{Ghani(2018)} introduce a notion of \emph{open game}, which is compositional and hence allows to build larger games from smaller blocks, where quantifiers are generalised to best-response functions. 

Hartmann and Gibbons~\cite{hartmann:gibbons:selection} apply selection functions to algorithm design by considering modified notions of (history dependent) products of selection functions, which allows them to design greedy algorithms in a modular way.

Abadi and Plotkin~\cite{abadi:plotkin:selection} consider programming languages for describing systems in terms of choices and their resulting costs and rewards, and use the selection monad to define computationally adequate denotational semantics for them. 

\paragraph{Organisation} Section~\ref{sec:dtt}: Dependent type trees. Section~\ref{sec:games-standard}: Quantifiers and higher-order games with dependent types. Section~\ref{sec:strategies-standard}: Selection functions and their use for the computation of optimal strategies, and main theorem. Section~\ref{sec:discussion}: Discussion of the relationship of our type of type trees with the $W$-type of iterative sets used by Leversha~\cite{Leversha(1976)} to define ordinals in dependent type theory, and by Aczel~\cite{Aczel(1978)}  to model CZF. Section~\ref{sec:formalisation}: Proofs and programs in Agda. 

%%%%%%%%%%%%%%%%%%%%%%%%%%%%%%%%%%%%%%%%%%%%%%%
%%%%%%%%%%%%%%%%%%%%%%%%%%%%%%%%%%%%%%%%%%%%%%%
\section{Dependent type trees}
\label{sec:dtt}
%%%%%%%%%%%%%%%%%%%%%%%%%%%%%%%%%%%%%%%%%%%%%%%
%%%%%%%%%%%%%%%%%%%%%%%%%%%%%%%%%%%%%%%%%%%%%%%

We assume some familiarity with dependent type theory \cite{Bove-Dybjer(2009)}. However, we reason informally, but rigorously, as in the book~\cite{hottbook}. We write the type $\Pi (x : X), A(x)$ as $(x : X) \to A(x)$. The elements of this types are functions that map elements $x : X$ to values in $A(x)$. We write the type $\Sigma (x : X), A(x)$ as $(x : X) \times A (x)$. The elements of this type are (dependent) pairs $(x , y)$ with $x : X$ and $y : A(x)$. 
% % I don't think we do this anymore:
% We sometimes omit brackets and write e.g.\ $A\, x$ in place of $A(x)$, assuming that application is left associative. 
We work with a universe $\Type$, which is a large type of small types closed under the type theoretic operations. 

We consider sequential game plays of the form $x_0, \dots, x_{n-1}$ where the type of the second move $x_1$ depends on the first move $x_0$, and the type of the third move $x_2$ depends on the first and second moves $x_0$ and $x_1$, and so on. In order to specify such a dependent sequence of moves, we first define dependent-type trees.

\begin{definition}[(Dependent) type trees] We define the type $\Tree$ of \emph{dependent type trees}, or just \emph{type trees} for short, by induction as follows
\begin{enumerate}
    \item There is an empty type tree, denoted by $\emptytree$.
    \item If $X$ is a type and $\Xf \colon X \to \Tree$ is a family of trees indexed by $X$, then there is a type tree with root $X$ and subtrees $\Xf(x)$ for each $x:X$, denoted by $X \cons \Xf$. We think of $\Xf$ as a \emph{forest}.
\end{enumerate}
So notice that $(-) \cons (-)$ is a function of type \[ (X : \Type) \times (X \to \Tree) \to \Tree. \] We adopt the following notational conventions:
\begin{enumerate}
    \item The variable $X$ ranges over types.
    \item The variable $\Xt$ ranges over type trees.
    \item The variable $\Xf$ ranges over type forests.
\end{enumerate}
\end{definition}

\begin{example}[Tic-Tac-Toe type tree] \label{ttt-tree} In Tic-Tac-Toe, the types of available moves at each round form a dependent type tree. The root of the tree is the type of the nine positions on the grid. Depending on the move of the first player, the second player then chooses from the type of the eight remaining available positions, and so on. Once one of the players wins or all the positions are filled then the next player does not have any further available move, arriving at an empty subtree.
\end{example}

A dependent sequence of moves as discussed above is given by a path in a dependent type tree.   

\begin{definition}[Paths]
We define the type $\Path \Xt$ of paths on a type tree $\Xt$ by induction on $\Xt$ as follows:
\begin{enumerate}
    \item There is one path in the empty tree $\emptytree$, which we denote by $\emptypath$.
    \item For every type $X$, every forest $\Xf \colon X \to \Tree$, every $x:X$ and every path $\xs$ in $\Xf(x)$, there is a path $x \cons \xs$ in the tree $X \cons \Xf$. The elements $x$ and $\xs$ are called the \emph{head} and \emph{tail} of the path $x \cons \xs$. 
\end{enumerate}
The above path constructors have the following types:
\begin{enumerate}
    \item $\emptypath : \Path \, \emptytree$.
    \item $(-) \cons (-) : (x : X) \times \Path(\Xf(x)) \to \Path (X \cons \Xf)$.
\end{enumerate}
We let the variable $\xs$ range over paths.
\end{definition}

\begin{example}[Tic-Tac-Toe paths] The paths in the Tic-Tac-Toe tree consists of exactly the sequences of legal moves that end with one of the players winning or a draw.
\end{example}

\newcommand{\structure}{\mathrm{structure}}

We will need to add different structures to the nodes of game trees, including a type of quantifiers (Section~\ref{subsec:quantifiers}), a type of selection functions (Section~\ref{subsec:selection-functions}) and move choices (Section~\ref{sec:strategies-standard}). We accomplish this with a type of structured trees defined as follows, with a parameter $\St$ for the structure, where structure on a type $X$ is taken to be a type $\St X$

\begin{definition}[Trees with structure] \label{def-tree-structure} Given structure $\St : \Type \to \Type$ and a tree $\Xt : \Tree$ we define a new type of structured trees \[ \structure (\St , \Xt) \] by induction on $\Xt$ as follows: 
\begin{enumerate}
    \item There is an empty structured tree \[\emptytree : \structure(\St, \emptytree).\]
    \item If $X$ is a type, $\Xf \colon X \to \Tree$ is a forest and 
    \begin{eqnarray*}
    \s & : & \St X, \\ 
    \ssf & : & (x : X) \to \structure(\St , \Xf(x)), 
    \end{eqnarray*}
    then there is a structured tree \[\s \cons \ssf : \structure(\St , X \cons \Xf)\] with root $X$ equipped with $\s : \St X$ and subtrees $\ssf(x)$ for $x : X$. 
\end{enumerate}
\end{definition}
Notice that in the notation $\s \cons \ssf$ we leave both types $X$ and $\Xf$ implicit, and that the forest $\ssf$ is indexed by the type $X$ rather than the type $\St X$. Although we use the same notation $(-) \cons (-)$, the type of type trees is of course not the same as that of structured type trees, and the latter depends on the former:
\[
\structure : (\Type \to \Type) \times \Tree \to \Type.
\]
Hence if $\Xt$ is a type tree, then $\structure (\St , \Xt)$ is the type of structured type trees over $\Xt$. So for \emph{one} type tree $\Xt : \Tree$ there are, in general, many structured type trees of type $\structure (\St , \Xt)$.

%%%%%%%%%%%%%%%%%%%%%%%%%%%%%%%%%%%%%%%%%%%%%%%
%%%%%%%%%%%%%%%%%%%%%%%%%%%%%%%%%%%%%%%%%%%%%%%
\section{Higher-order games with dependent types}
\label{sec:games-standard}
%%%%%%%%%%%%%%%%%%%%%%%%%%%%%%%%%%%%%%%%%%%%%%%
%%%%%%%%%%%%%%%%%%%%%%%%%%%%%%%%%%%%%%%%%%%%%%%

We start by recalling the notion of a \emph{quantifier}, needed for the definition of a higher-order game (see \cite{EO(2010)}).
For the remainder of the paper we fix a type $R$ of game outcomes. For example, for Tic-Tac-Toe, as discussed in the Introduction, we take $R = \{ -1, 0, 1 \}$, but for other games we can choose~$R$ to be e.g.\ the type of real numbers or the type of booleans.

%%%%%%%%%%%%%%%%%%%%%%%%%%%%%%%%%%%%%%%%%%%%%%%
\subsection{Quantifiers}
\label{subsec:quantifiers}
%%%%%%%%%%%%%%%%%%%%%%%%%%%%%%%%%%%%%%%%%%%%%%%

Let $\K X$ be an abbreviation for the type $(X \to R) \to R$, with $R$ arbitrary but fixed, as described above.

\begin{definition}[Quantifiers] \label{def:quantifier} We call functions of type $\K X$ \emph{quantifiers}.
\end{definition}

If $R$ is the type of booleans, the elements of the function type $\K X$ include the standard existential and universal quantifiers, but we also allow numerical quantifiers, as in the following example.

\begin{example}[$\min$ and $\max$] The simplest non-standard examples of quantifiers are the $\min, \max \colon (X \to R) \to R$ functions, which we will use in our Tic-Tac-Toe running example. The idea is that the type of outcomes $R$ will the linearly ordered type with three values $-1 < 0 < 1$, so that given a type of available moves $X$, the quantifiers select the preferred outcome given a local outcome function $X \to R$. 
\end{example}

% For the rest of the paper we will consider the addition of different structures on a given type tree $\Xt$. For instance, given a type tree $\Xt$, we can define a quantifier tree $\KK \Xt$ by associating to each node $X$ the type of quantifiers $\K X$, using Definition~\ref{def-tree-structure}.

\begin{definition}[Quantifier tree] Given a type tree $\Xt$, we define the type of \emph{quantifier trees over $\Xt$} by
\[
\KK \Xt = \structure(\K, \Xt).
\]
% \begin{enumerate}
%     \item To the empty tree $\emptytree$ we associate the type $\KK\emptytree$ which has exactly one element $\emptyqtree$, which we think of as the empty quantifier tree.
%     \item To a type tree of the form $X \cons \Xf$, we associate the type of quantifier trees with root $\K X$ (the type of quantifiers on $X$) with one branch $\KK (\Xf(x))$ for each $x : X$. We think of the type family $x \mapsto \KK(\Xf(x))$ as a forest.  
% \end{enumerate}
% With this notation, the above constructors have the the following types:
% \begin{enumerate}
%     \item $\emptyqtree : \KK\emptytree$ 
%     \item $(-) \cons (-) : \K X \times ((x : X) \to \KK (\Xf(x))) \to \KK (X \cons \Xf)$
% \end{enumerate}
We adopt the following notational conventions:
\begin{enumerate}
    \item The variable $\phi$ ranges over the type $\K X$ of quantifiers.
    \item The variable $\phit$ ranges over the type $\KK \Xt$ of quantifier trees.
    \item The variable $\phif$ ranges over quantifier forests $(x \colon X) \to \KK(\Xf(x))$.
\end{enumerate}
\end{definition}

% It is easy to see that the above is an instance of a general construction that given a type tree $\Xt$ and a type function $T$ will inductively add the $T$-structure to each node of $\Xt$. We see other examples of this in the next section.

\begin{example}[Tic-Tac-Toe quantifier tree] \label{ttt-q-tree} We associate a quantifier tree to the Tic-Tac-Toe type tree of Example \ref{ttt-tree} by alternating the $\min$ and $\max$ quantifiers depending on the level of the tree. Assuming that player X moves first, and that the outcome $-1$ indicates that player X has won, the root of the quantifier tree is the $\min$ quantifier.
\end{example}

\begin{definition}[Binary dependent product of quantifiers \cite{EO(2010)}] \label{def-quant-prod}
Given a quantifier \[ \phi : \K X \] on a type $X$ and given a family $Y$ of types indexed by~$X$, and a family of quantifiers \[ \gamma \colon (x : X) \to \K(Y(x)),\] we define a quantifier \[ \phi \Ktimes \gamma : \K((x:X) \times Y(x))\] by
\[
(\phi \Ktimes \gamma)(q) = \phi(\lambda x . \gamma(x)(\lambda y . q(x, y))). 
\]
\end{definition}

\begin{example}[Product of $\min$ and $\max$] For instance, for the quantifiers $\min \colon (X \to R) \to R$ and $\max \colon (x : X) \to (Y(x) \to R) \to R$, their product corresponds to a $\min$-$\max$ operation where the range of $y$ depends on $x$:
\[
(\min \Ktimes \max)(q) = \min_{x : X} \max_{y : Y(x)} q(x, y). 
\]
\end{example}

We can now adapt the iterated product of quantifiers \cite{EO(2010)} to quantifier trees.

\begin{definition}[K-sequence] \label{def-K-sequence}
To each quantifier tree \[ \phit : \KK \Xt \] we associate a single quantifier \[\Ksequence\left(\phit\right) : \K (\Path \Xt)\] on the type of paths of the tree $\Xt$ by induction as follows.
\begin{enumerate}
    \item To the empty tree $\emptytree : \Tree$ we associate the quantifier \[ \lambda q . q \emptypath : \K (\Path\emptytree).\]
    \item To a quantifier tree of the form $\phi \cons \phif$ we associate the quantifier \[\phi \Ktimes (\lambda x . \Ksequence\left(\phif(x)\right)).\]
\end{enumerate}
This inductive definition can also be written as
\begin{eqnarray*}
    \Ksequence ([]) & = & \lambda q. \emptypath, \\
    \Ksequence (\phi \cons \phif) & = & \phi \Ktimes (\lambda x . \Ksequence\left(\phif(x)\right)) \\
    & = & \lambda q.\phi(\lambda x . (\Ksequence\left(\phif(x)\right))(\lambda \ys . q(x, \ys))),
\end{eqnarray*}
by definition \ref{def-quant-prod}.
\end{definition}

%%%%%%%%%%%%%%%%%%%%%%%%%%%%%%%%%%%%%%%%%%%%%%%
\subsection{Higher-order games}
%%%%%%%%%%%%%%%%%%%%%%%%%%%%%%%%%%%%%%%%%%%%%%%

Using the notions of type trees and quantifier trees introduced above, we can formulate finite higher-order games as follows.

\begin{definition}[Dependent higher-order game] \label{def:game} A finite \emph{higher-order game} is given by a tuple $(\Xt, R, q, \phit)$ where
\begin{enumerate}
    \item $\Xt$ is a \emph{type tree} of moves available at different stages of the game,
    \item $R$ is the type of possible \emph{outcomes} of the game,
    \item $q \colon \Path(\Xt) \to R$ is the \emph{outcome function},
    \item $\phit$ is a quantifier tree. 
\end{enumerate}
\end{definition}

One can think of the type tree $\Xt$ as describing the rules of the game, specifying the available moves at each position or stage of the game. The function $q \colon \Path(\Xt) \to R$ gives the outcome at the end of the game, for a given sequence of moves. And the quantifier tree $\phit$ describes the objective of the game, assigning a quantifier to each position in the game. The following two examples have the same rules and outcome functions, but different objectives.

\begin{example}[Tic-Tac-Toe] \label{def:ttt-game} The game of Tic-Tac-Toe can be modelled as the higher-order game $(\Xt, R, q, \phit)$ where 
\begin{enumerate}
    \item $\Xt$ is a type tree of moves defined in Example \ref{ttt-tree},
    \item $R$ is the linearly ordered type $-1 < 0 < 1$,
    \item $q \colon \Path(\Xt) \to R$ is the \emph{outcome function} that, given a sequence of moves, has values $-1$ or $1$ if one of the players has won, or $0$ if no more moves are available, and
    \item $\phit$ is a quantifier tree from Example \ref{ttt-q-tree}.  
\end{enumerate}
\end{example}

\begin{example}[Anti-Tic-Tac-Toe] \label{def:anti-ttt-game} The game of Anti-Tic-Tac-Toe, where players try to avoid making three in a row, can be modelled exactly as Tic-Tac-Toe (Example \ref{def:ttt-game}), except that the quantifiers in the quantifier tree $\phit$ from Example \ref{ttt-q-tree} are swapped. The quantifiers still alternate between $\min$ and $\max$, but now the root of the quantifier tree is the $\max$ quantifier.
\end{example}

Consider also the following example, where the quantifier tree is not based on $\min$ and $\max$, but instead on existential quantifiers.

\begin{example}[Eight Queens] \label{def:eight-queens-game} The game of Eight Queens, where a player attempts to place eight queens on a chess board in such way that no queen can capture the other ones, can be modelled as the higher-order game $(\Xt, R, q, \phit)$ as follows:
\begin{enumerate}
    \item $\Xt$ is a type tree of moves with depth 8, describing the positions on the chess board that have not been used yet.
    \item $R$ is the type of Booleans $\BB = \{ \true, \false \}$.
    \item $q \colon \Path(\Xt) \to R$ is the \emph{outcome function} that, given the position of the eight queens on the board, has values $\false$ or $\true$ depending on whether some queen can capture another or not.
    \item $\phit$ is the quantifier tree of existential quantifiers $\exists \colon (X \to \BB) \to \BB$ at each node, where $X$ is the type of available moves at the node, defined by $\exists(p)=\true$ if and only if there is $x:X$ such that $p(x)=\true$.
\end{enumerate}
\end{example}

Notice that, in the above example, in some sense there is only one player, although the concept of player is not incorporated explicitly in our notion of game.

\begin{definition}[Optimal outcome] Given a game $(\Xt, R, q, \phit)$ we call $\Ksequence\left(\phit\right)(q) : R$ the \emph{optimal outcome} of the game.
\end{definition}

\begin{example} In the case of Tic-Tac-Toe, it is well-known that the optimal outcome of the game is a draw. Perhaps less well-known is that in the case of Anti-Tic-Tac-Toc the optimal outcome is also a draw. It is too laborious to calculate these by hand using our definitions, but the definitions can be directly translated to e.g.\ Haskell or Agda programs so that optimal outcomes of games can be automatically calculated~\cite{EO(2010D),TypeTopologyGames}.
\end{example}

%%%%%%%%%%%%%%%%%%%%%%%%%%%%%%%%%%%%%%%%%%%%%%%
\section{Optimal strategies}
\label{sec:strategies-standard}
%%%%%%%%%%%%%%%%%%%%%%%%%%%%%%%%%%%%%%%%%%%%%%%

% Given the definition of a finite higher-order game as above, the notions of a strategy and an optimal strategy can be defined as follows. 
A strategy for a given type tree selects an element of each node of the tree. Using Definition~\ref{def-tree-structure}, this corresponds to the identity structure $S X = X$. That is, for each type $X$ of moves in the game tree, the structure is an element of $S X$, which amounts to saying that we pick an element of~$X$.
\begin{definition}[Type of strategies] 
Given a type tree $\Xt$, we define the type of \emph{strategies over $\Xt$} by
\[
\Strategy \Xt = \structure((\lambda X. X), \Xt).
\]
% We associate a type $\Strategy \Xt$ of \emph{strategies} to each type tree $\Xt$ by induction as follows:
% \begin{enumerate}
%     \item To the empty tree $\emptytree$ we associate the type $\Strategy\emptytree$ which has exactly one element $\emptystrategy$, which we think of as the empty strategy.
%     \item To a type tree of the form $X \cons \Xf$, we associate the type of strategies $\Strategy (X \cons \Xf) \equiv X \times ((x : X) \to \Strategy (\Xf(x)))$, so that an element of this type is a move in $X$ together with a function that maps each such move $x : X$ to a strategy in the subtree $\Xf(x)$.  
% \end{enumerate}
\end{definition}

% An element of the type $\Strategy \Xt$ describes which element $x \colon X$ should be picked at each internal node of the tree $\Xt$. 
The type $\Strategy\emptytree$ has exactly one element $\emptystrategy$, which we think of as the empty strategy. Given a type tree of the form $X \cons \Xf$, an element of the type of strategies 
\[ \Strategy (X \cons \Xf) =  X \times ((x : X) \to \Strategy (\Xf(x))), \]
is a pair consisting of a move in $X$ together with a function that maps each such move $x : X$ to a strategy in the subtree $\Xf(x)$. 

\begin{definition}[Strategic path] Each $\sigma \colon \Strategy \Xt$ determines a single path \[ \spath(\sigma) : \Path \Xt,\] which we call the \emph{strategic path of $\sigma$}, as follows:
\begin{enumerate}
    \item If $\sigma : \Strategy \, \emptytree$ then its strategic path is the empty path~$\emptypath$.
    \item If $\sigma : \Strategy (X \cons \Xf)$ then $\sigma$ is of the form $x \cons \sigmaf$ and its strategic path is defined as $x \cons \spath (\sigmaf(x))$.  
\end{enumerate}
\end{definition}

\begin{definition}[Optimal strategies] \label{def:strategy} Given a game $(\Xt, R, q, \phit)$, we define the notion of optimal strategy by induction on $\Xt$ as follows.
\begin{enumerate}
    \item If $\Xt$ is the empty tree $\emptytree$ then any $\sigma :\strategy(\Xt)$ is empty, which is considered optimal,
    \item When the game has a type tree of the form $X \cons \Xf$ with a quantifier tree for the form $\phi \cons \phif$, then a strategy $x_0 \cons \sigmaf$ is optimal if the following two conditions hold, where
    $q_x  =  \lambda \ys . q(x \cons \ys)$:
    \vspace{1ex}
    \begin{enumerate}
        \item $q_{x_0}(\spath (\sigmaf(x_0))) = \phi(\lambda x . q_x(\spath (\sigmaf(x))))$.
        \vspace{1ex}
    \item For each $x : X$, the strategy $\sigmaf(x) \colon \strategy(\Xf(x))$ is optimal at the subgame $(\Xf(x), R, q_x, \phif(x))$.
    \end{enumerate}
\end{enumerate}
This is a generalisation of the notion of subgame perfect equilibrium, where quantifiers other than $\min$ and $\max$ are allowed.
\end{definition}
    
\begin{theorem}[Optimality theorem] Given a game $(\Xt, R, q, \phit)$ and an optimal strategy $\sigma$ for it, we have that the outcome on the strategic path of $\sigma$ is indeed the optimal outcome of the game, that is,
\begin{equation*}
    q(\spath \sigma) = \Ksequence(\phit)(q).
\end{equation*}
\end{theorem}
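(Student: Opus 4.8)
The plan is to prove this by induction on the type tree $\Xt$, mirroring the inductive definitions of $\Ksequence$, $\spath$, and optimality. The statement is naturally an induction because all three ingredients — the sequence-quantifier, the strategic path, and the notion of optimality — are defined by recursion on $\Xt$, so an inductive argument will let each recursive clause line up with its counterpart.

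For the base case, suppose $\Xt = \emptytree$. Then any strategy $\sigma$ is the empty strategy, its strategic path is $\emptypath$, and by definition $\Ksequence([])(q) = q(\emptypath)$. Since $\spath(\sigma) = \emptypath$ as well, both sides reduce to $q(\emptypath)$ and the equation holds definitionally. The work is entirely in the inductive step.

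For the inductive step, suppose $\Xt = X \cons \Xf$ with quantifier tree $\phi \cons \phif$, and let $\sigma = x_0 \cons \sigmaf$ be optimal. Writing $q_x = \lambda \ys . q(x \cons \ys)$ as in Definition~\ref{def:strategy}, I would first unfold the left-hand side: $\spath(\sigma) = x_0 \cons \spath(\sigmaf(x_0))$, so $q(\spath(\sigma)) = q_{x_0}(\spath(\sigmaf(x_0)))$. Now I invoke optimality condition (a), which rewrites this as $\phi(\lambda x . q_x(\spath(\sigmaf(x))))$. The crucial move is to apply the induction hypothesis to each subgame: by optimality condition (b), each $\sigmaf(x)$ is an optimal strategy for the subgame $(\Xf(x), R, q_x, \phif(x))$, so the induction hypothesis gives $q_x(\spath(\sigmaf(x))) = \Ksequence(\phif(x))(q_x)$ for every $x : X$. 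Substituting this pointwise inside $\phi$ yields $\phi(\lambda x . \Ksequence(\phif(x))(q_x))$. It remains to recognise this as the right-hand side: unfolding $\Ksequence(\phi \cons \phif)(q)$ via Definition~\ref{def-K-sequence} and Definition~\ref{def-quant-prod} gives exactly $\phi(\lambda x . \Ksequence(\phif(x))(\lambda \ys . q(x \cons \ys))) = \phi(\lambda x . \Ksequence(\phif(x))(q_x))$, matching what we obtained.

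I expect the main obstacle to be purely bookkeeping rather than conceptual: one must be careful that the outcome function presented to each subgame is precisely $q_x = \lambda \ys . q(x \cons \ys)$, the same $q_x$ appearing in both the definition of optimality and the unfolding of $\Ksequence$ through the binary product $\Ktimes$. Once the substitution under the binder $\lambda x$ is justified — which is legitimate because the induction hypothesis holds for \emph{every} $x : X$ — the two expressions coincide by the defining equation of $\Ktimes$. No continuity or well-foundedness argument beyond the structural induction on $\Xt$ is needed, since the tree is finite by construction.
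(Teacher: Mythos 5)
Your proof is correct and follows exactly the same route as the paper's: structural induction on $\Xt$, unfolding $\spath$, applying optimality condition (a), invoking the induction hypothesis pointwise under the binder via condition (b), and matching the result with the unfolding of $\Ksequence$ through $\Ktimes$. Nothing to add.
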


\begin{proof} By a induction on the type tree $\Xt$. The base case when $\Xt$ is the empty tree is immediate. When the game is defined over a type tree $X \cons \Xf$, let $\phi \cons \phif$ be the associated quantifier tree, and $x_0 \cons \sigmaf$ the given strategy. Then, by the induction hypothesis, we have that for every $x : X$ 
\[ q_x(\spath (\sigmaf(x))) = \Ksequence (\phif(x))(q_x) \]
and hence
\[
\begin{array}{rcl}
    q(\spath (x_0 \cons \sigmaf)) 
        & = & q_{x_0}(\spath (\sigmaf(x_0))) \\[2mm]
        & = &  \phi(\lambda x . q_x(\spath (\sigmaf(x)))) \\[1mm]        
        & \stackrel{({\rm IH})}{=} & \phi(\lambda x . (\Ksequence\left(\phif(x)\right))(q_x)) \\[2mm]
        & = & \Ksequence (\phi \cons \phif) (q)
\end{array}
\]
using in the first step that $x_0 \cons \sigmaf$ is an optimal strategy. \hfill $\Box$
\end{proof}

%%%%%%%%%%%%%%%%%%%%%%%%%%%%%%%%%%%%%%%%%%%%%%%
% \section{Calculating optimal strategies}
% \label{sec:strategies-monadic}
%%%%%%%%%%%%%%%%%%%%%%%%%%%%%%%%%%%%%%%%%%%%%%%

%%%%%%%%%%%%%%%%%%%%%%%%%%%%%%%%%%%%%%%%%%%%%%%
\subsection{Selection functions}
\label{subsec:selection-functions}
%%%%%%%%%%%%%%%%%%%%%%%%%%%%%%%%%%%%%%%%%%%%%%%

Let $\J X$ be an abbreviation for the type $(X \to R) \to X$. Recall the following definitions from~\cite{EO(2010)}.

\begin{definition}[Selection functions] \label{def:selection} \leavevmode
\begin{enumerate}
    \item We call functions of type $\J X$ \emph{selection functions}.
    \item We say that a selection function $\varepsilon : \J X$ attains a quantifier $\phi : \K X$ if for all $p \colon X \to R$.
    \begin{equation*}
        p(\varepsilon(p)) = \phi(p).
    \end{equation*}
    \item For each selection function $\varepsilon : \J X$ we can associate a quantifier $\overline{\varepsilon} : \K X$ defined by
    \begin{equation*}
        \overline{\varepsilon}(p) = p(\varepsilon(p))
    \end{equation*} 
    so that by definition the selection function $\varepsilon$ attains the quantifier $\overline{\varepsilon}$.
\end{enumerate}
\end{definition}

\newcommand{\argmin}{\operatorname{argmin}}
\newcommand{\argmax}{\operatorname{argmax}}

\begin{example}[$\argmin$ and $\argmax$] Let $X$ be a finite non-empty type and $R$ a linearly ordered type. Standard examples of selection functions are \[ \argmin, \argmax \colon (X \to R) \to X, \] which select an arbitrary point at which the outcome function $q \colon X \to R$ attains its minimum or maximum. We will use them in our Tic-Tac-Toe running example. It is easy to see that $\argmin$ and $\argmax$ attain the $\min$ and $\max$ quantifiers, respectively. 
\end{example}

\begin{definition}[Selection function tree] 
Given a type tree $\Xt$, we define the type of \emph{selection function trees over $\Xt$} by
\[
\JJ \Xt = \structure(\J, \Xt).
\]
% We associate a type of \emph{selection function trees} $\JJ \Xt$ to each type tree $\Xt$ by induction as follows:
% \begin{enumerate}
%     \item To the empty tree $\emptytree$ we associate the type $\JJ\emptytree$ which has exactly one element $\emptyqtree$, which we think of as the empty selection function tree.
%     \item To a type tree of the form $X \cons \Xf$, we associate the type of selection function trees with root $\J X$ (the type of selection functions on $X$) with one branch $\JJ (\Xf(x))$ for each $x : X$. We think of the type family $x \mapsto \JJ(\Xf(x))$ as a forest.  
% \end{enumerate}
% With this notation, the above constructors have the the following types:
% \begin{enumerate}
%     \item $\emptyqtree : \JJ\emptytree$, 
%     \item $(-) \cons (-) : \J X \times ((x : X) \to \JJ (\Xf(x))) \to \JJ (X \cons \Xf)$.
% \end{enumerate}
We adopt the following conventions:
\begin{enumerate}
    \item The variable $\varepsilon$ ranges over the type $\J X$ of selection functions.
    \item The variable $\epsilont$ ranges over the type $\JJ \Xt$ of selection function trees.
    \item The variable $\epsilonf$ ranges over selection function forests $(x \colon X) \to \JJ(\Xf(x))$.
\end{enumerate}
\end{definition}

\begin{example}[Tic-Tac-Toe selection function tree] \label{ttt-s-tree} We associate a selection function tree to the Tic-Tac-Toe type tree of Example \ref{ttt-tree} by alternating the $\argmin$ and $\argmax$ selection functions depending on the level of the tree. Assuming that player X moves first, the root of the selection function tree is the $\argmin$ selection function.
\end{example}

\begin{definition}[Attainability on trees] Given a tree~$\Xt$, a quantifier tree $\phit : \KK \Xt$ and a selection function tree $\epsilont : \JJ \Xt$, we define the relation \begin{quote}
\emph{$\epsilont$ attains $\phit$}     
\end{quote}
by induction on $\Xt$ as follows:
\begin{enumerate}
    \item If $\Xt$ is the empty tree $\emptytree$ then $\epsilont$ attains $\phit$.
    \item If the tree $\Xt$ is of the form $X \cons \Xf$ then $\varepsilon \cons \epsilonf$ attains $\phi \cons \phif$ if $\varepsilon$ attains $\phi$ and for each $x : X$ we have that $\epsilonf(x)$ attains $\phif(x)$. 
\end{enumerate}
\end{definition}

\begin{example} The selection function tree for Tic-Tac-Toe given in Example \ref{ttt-s-tree} has been defined so that it attains the Tic-Tac-Toe quantifier tree of Example \ref{ttt-q-tree}.
\end{example}

\begin{definition}[From selection trees to quantifier trees] Given any tree~$\Xt$ and selection function tree $\epsilont : \JJ \Xt$, we define a quantifier tree \[ \overline{\epsilont} : \KK \Xt \] by induction on $\Xt$ as follows:
\begin{enumerate}
    \item If $\Xt$ is the empty tree $\emptytree$ then $\epsilont$ is the empty tree and so is $\overline{\epsilont}$.
    \item If the tree $\Xt$ is of the form $X \cons \Xf$ then the selection function tree $\epsilont$ is of the form $\varepsilon \cons \epsilonf$ and $\overline{\epsilont}$ is the quantifier tree $\overline{\varepsilon} \cons (\lambda x . \overline{\epsilonf(x)})$. 
\end{enumerate}
A simple proof by induction shows that $\epsilont$ attains $\phit$ if and only if  $\overline{\epsilont} = \phit$. 
\end{definition}

\begin{example} It is easy to see that if $\epsilont$ is the argmin-argmax selection function tree for Tic-Tac-Toe given in Example \ref{ttt-s-tree} then $\overline{\epsilont}$ is the min-max quantifier tree of Example \ref{ttt-q-tree}.
\end{example}

\begin{definition}[Dependent product of selection functions \cite{EO(2010)}]
\label{def-sel-fct-prod}
For any selection function \[ \varepsilon : \J X \]on a type $X$, any a family $Y$ of types indexed by $X$, and any family \[\delta \colon (x : X) \to \J(Y(x))\] of selection functions, we define a selection function \[ \varepsilon \Jtimes \delta : \J((x:X) \times Y(x))\] by
\[
(\varepsilon \Jtimes \delta)(q) = (x_0, \nu(x_0)) 
\]
where
\begin{eqnarray*}
    \nu(x) & = & \delta(x)(q_x) \\
    x_0 & = & \varepsilon(\lambda x . q_x(\nu(x)).
\end{eqnarray*}
\end{definition}

We can also adapt the iterated product of selection functions \cite{EO(2010)} to selection function trees as follows.

\begin{definition}[J-sequence] \label{def-J-sequence}
To each selection function tree $\epsilont : \JJ \Xt$ we associate a single selection function \[ \Jsequence(\epsilont) : \J (\Path \Xt) \] on the type of paths of the tree $\Xt$ by induction as follows.
\begin{enumerate}
    \item To the empty tree $\emptytree : \Tree$ we associate the selection function $\lambda q . \emptypath$ of type $\J (\Path \, \emptytree)$.
    \item To a selection function tree of the form $\varepsilon \cons \epsilonf$ we associate the selection function $\varepsilon \Jtimes (\lambda x . \Jsequence \epsilonf(x))$.
\end{enumerate}
\end{definition}

Note that, by Definition \ref{def-sel-fct-prod}, we have
\[ (\varepsilon \Jtimes (\lambda x . \Jsequence \epsilonf(x)))(q) = x_0 \cons \nu(x_0)
\]
where
\begin{eqnarray*}
    \nu(x) & = & (\Jsequence \epsilonf(x))(q_x) \\
    x_0 & = & \varepsilon(\lambda x . q_x(\nu(x)).
\end{eqnarray*}

\begin{remark}
    Notice the similarity of Definitions~\ref{def-K-sequence} and~\ref{def-J-sequence}. They can be unified in a single definition that assumes that the structure is the functor of a strong monad, which $\K$ and $\J$ are. Moreover, this is also the case for Definitions~\ref{def-quant-prod} and~\ref{def-sel-fct-prod}. Although they look very different, they are both instances of a general definition for strong monads. For any strong monad $T$ we can define
    \[
    t \otimes^T f = (\lambda x. T(\lambda y. (x , y))^\dagger (f (x)))(t)
    \]
    for any $t : T X$ and $f : (x : X) \to T(Y x)$, where $(-)^\dagger$ is the Kleisli extension operator. 
    The definition of the strong monads $K$ and $J$ can be found in~\cite{EO(2010)}, \cite{EO(2010)MR2678125} or~\cite{EO(2010D)} (and also in our Agda formalisation discussed in Section~\ref{sec:formalisation}).
\end{remark}

%%%%%%%%%%%%%%%%%%%%%%%%%%%%%%%%%%%%%%%%%%%%%%%
\subsection{Calculating optimal strategies}
%%%%%%%%%%%%%%%%%%%%%%%%%%%%%%%%%%%%%%%%%%%%%%%

Finally, let us now show how the $\J$-sequence construction can be used to calculate optimal strategies in games based on dependent type trees.

\begin{definition}[Strategy of selection tree] Given tree $\Xt$, a selection function tree $\epsilont : \JJ \Xt$ and an outcome function $q \colon \Path \Xt \to R$, we define a strategy \[\strategy(\epsilont, q) : \Strategy \Xt\] by induction as follows:
\begin{enumerate}
    \item When $\Xt$ is the empty tree then $\strategy(\epsilont, q)$ is the empty strategy.
    \item When $\Xt$ is of the form $X \cons \Xf$ and $\epsilont = \varepsilon \cons \epsilonf$ then we define \[\strategy(\epsilont, q) = (x_0, \sigmaf)\] where
\begin{eqnarray*}
    x_0 & = & \head(\Jsequence(\epsilont)(q)), \\
    \sigmaf(x) & = & \strategy(\epsilonf(x), q_x).
\end{eqnarray*}
\end{enumerate}
\end{definition}

\begin{lemma}[Main lemma] \label{main-lemma} Given a tree $\Xt$, a selection function tree $\epsilont : \JJ \Xt$ and an outcome function $q \colon \Path \Xt \to R$, we have
\begin{equation*}
    \spath(\strategy(\epsilont, q)) = \Jsequence(\epsilont)(q). 
\end{equation*}
\end{lemma}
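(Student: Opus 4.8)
The plan is to prove the identity by induction on the type tree $\Xt$, exploiting that the two sides follow the same recursion: both $\strategy(\epsilont, q)$ and $\Jsequence(\epsilont)$ are defined by the identical case split on whether $\Xt$ is empty or of the form $X \cons \Xf$, with $\epsilont$ split correspondingly.

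For the base case $\Xt = \emptytree$, the definition of $\strategy$ gives the empty strategy, whose strategic path is $\emptypath$ by the first clause of $\spath$, while $\Jsequence(\emptytree)(q) = (\lambda q.\emptypath)(q) = \emptypath$ by the first clause of the $\J$-sequence. Hence both sides reduce definitionally to $\emptypath$, and there is nothing further to check.

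For the inductive step, suppose $\Xt = X \cons \Xf$ and write $\epsilont = \varepsilon \cons \epsilonf$. First I would unfold the right-hand side using Definition~\ref{def-sel-fct-prod} and the computation displayed after Definition~\ref{def-J-sequence}, obtaining $\Jsequence(\epsilont)(q) = x_0 \cons \nu(x_0)$ where $\nu(x) = (\Jsequence\,\epsilonf(x))(q_x)$ and $x_0 = \varepsilon(\lambda x . q_x(\nu(x)))$. Then I would unfold the left-hand side: by definition $\strategy(\epsilont, q) = (x_0, \sigmaf)$ with the same $x_0 = \head(\Jsequence(\epsilont)(q))$ and $\sigmaf(x) = \strategy(\epsilonf(x), q_x)$, so the second clause of $\spath$ gives $\spath(\strategy(\epsilont, q)) = x_0 \cons \spath(\strategy(\epsilonf(x_0), q_{x_0}))$. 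It then remains to match the tails: applying the induction hypothesis to the subtree $\Xf(x_0)$, the selection forest $\epsilonf(x_0)$, and the localised outcome function $q_{x_0} = \lambda \ys . q(x_0 \cons \ys)$, I get $\spath(\strategy(\epsilonf(x_0), q_{x_0})) = (\Jsequence\,\epsilonf(x_0))(q_{x_0}) = \nu(x_0)$, so both sides equal $x_0 \cons \nu(x_0)$.

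I expect the main (and essentially only) obstacle to be bookkeeping rather than mathematics: confirming that the head $x_0$ chosen by $\strategy$, namely $\head(\Jsequence(\epsilont)(q))$, is literally the head produced by the $\Jtimes$ unfolding of the $\J$-sequence, and that the induction hypothesis is invoked at the correct subgame with the localised $q_{x_0}$. The definitions were arranged precisely so that these heads coincide; once that coherence is observed, the tails agree immediately by the induction hypothesis and no genuine computation is required.
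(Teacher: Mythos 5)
Your proof is correct and follows essentially the same route as the paper's: induction on the type tree, unfolding both sides in the cons case to $x_0 \cons (\text{tail})$, and applying the induction hypothesis at $x_0$ with the localised outcome function $q_{x_0}$ to identify the tails. The one point you flag as bookkeeping --- that $\head(\Jsequence(\epsilont)(q))$ is the same $x_0 = \varepsilon(\lambda x . q_x(\nu(x)))$ appearing in the $\Jtimes$ unfolding --- is exactly the identification the paper's proof makes silently, so nothing is missing.
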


\begin{proof} By induction on $\Xt$. When $\Xt$ is the empty tree, the result is immediate. Let $\Xt = X \cons \Xf$ and $\epsilont = \varepsilon \cons \epsilonf$ and assume, for each $x : X$, the induction hypothesis holds
\begin{equation*}
    \spath(\strategy(\epsilonf(x), q_x)) = \Jsequence(\epsilonf(x))(q_x) 
\end{equation*}
Then
\[
\begin{array}{rcl}
    \spath(\strategy(\epsilont, q)) 
        & = & \spath(x_0 \cons \sigmaf) \\[2mm]
        & = & x_0 \cons \spath(\sigmaf(x_0)) \\[1mm]
        & \stackrel{(\text{IH})}{=} & x_0 \cons \nu(x_0) \\[2mm]
        & = & \Jsequence(\epsilont)(q)
\end{array}
\]
where
\begin{eqnarray*}
    \nu(x) & = & \Jsequence (\epsilonf(x))(q_x) \\
    x_0 & = & \varepsilon(\lambda x . q_x(\nu(x))) \\
    \sigmaf(x) & = & \strategy(\epsilonf(x), q_x),
\end{eqnarray*}
which concludes the proof. \hfill $\Box$
\end{proof}

\begin{lemma}[Selection strategy lemma] \label{sel-strat-lemma} Given a type tree $\Xt$, an outcome function $q \colon \Path \Xt \to R$, and a selection function tree $\epsilont : \JJ \Xt$, \[ \strategy(\epsilont, q) : \Strategy \Xt\] is optimal for the game $(\Xt, R, q, \overline{\epsilont})$.
\end{lemma}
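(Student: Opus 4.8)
The plan is to prove the statement by induction on the type tree $\Xt$, closely following the structure of the earlier definitions and lemmas. I want to show that $\strategy(\epsilont, q)$ satisfies the two conditions in the definition of optimal strategy (Definition~\ref{def:strategy}) with respect to the quantifier tree $\overline{\epsilont}$. The base case, where $\Xt = \emptytree$, is immediate since the empty strategy is optimal by definition. For the inductive step, I take $\Xt = X \cons \Xf$ and $\epsilont = \varepsilon \cons \epsilonf$, so that $\overline{\epsilont} = \overline{\varepsilon} \cons (\lambda x . \overline{\epsilonf(x)})$ and $\strategy(\epsilont, q) = (x_0, \sigmaf)$ with $x_0 = \head(\Jsequence(\epsilont)(q))$ and $\sigmaf(x) = \strategy(\epsilonf(x), q_x)$.

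For the inductive step I would verify the two conditions separately. Condition~(b), that each $\sigmaf(x) = \strategy(\epsilonf(x), q_x)$ is optimal for the subgame $(\Xf(x), R, q_x, \overline{\epsilonf(x)})$, follows directly from the induction hypothesis applied to $\Xf(x)$ with outcome function $q_x$ and selection function tree $\epsilonf(x)$, noting that $\overline{\epsilonf(x)}$ is exactly the quantifier tree paired with $\Xf(x)$ in $\overline{\epsilont}$. Condition~(a) requires showing
\[
q_{x_0}(\spath(\sigmaf(x_0))) = \overline{\varepsilon}(\lambda x . q_x(\spath(\sigmaf(x)))).
\]
To establish this I would use Lemma~\ref{main-lemma}, which gives $\spath(\strategy(\epsilonf(x), q_x)) = \Jsequence(\epsilonf(x))(q_x)$, so that $\spath(\sigmaf(x)) = \nu(x)$ where $\nu(x) = (\Jsequence(\epsilonf(x)))(q_x)$ in the notation of Definition~\ref{def-J-sequence}. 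Unfolding $x_0 = \varepsilon(\lambda x . q_x(\nu(x)))$ and the definition of $\overline{\varepsilon}(p) = p(\varepsilon(p))$ from Definition~\ref{def:selection}, the right-hand side becomes $(\lambda x . q_x(\nu(x)))(x_0) = q_{x_0}(\nu(x_0)) = q_{x_0}(\spath(\sigmaf(x_0)))$, which is exactly the left-hand side.

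The main obstacle, such as it is, lies in bookkeeping rather than in any deep difficulty: I must keep careful track of the three related objects $\head(\Jsequence(\epsilont)(q))$, the $x_0$ appearing in the unfolding of $\varepsilon \Jtimes (\lambda x . \Jsequence \epsilonf(x))$, and the $x_0$ defining the strategy, and confirm they all coincide. This identification is precisely what the note following Definition~\ref{def-J-sequence} records, namely $(\varepsilon \Jtimes (\lambda x . \Jsequence \epsilonf(x)))(q) = x_0 \cons \nu(x_0)$ with $x_0 = \varepsilon(\lambda x . q_x(\nu(x)))$, so that $\head(\Jsequence(\epsilont)(q)) = x_0$ as required. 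Once these threads are aligned, condition~(a) reduces to an application of the defining equation $\overline{\varepsilon}(p) = p(\varepsilon(p))$, and the induction closes. It is worth remarking that this lemma makes no assumption that $\epsilont$ attains any particular quantifier tree; the game is deliberately taken to be $(\Xt, R, q, \overline{\epsilont})$, so that the relevant quantifier tree is manufactured from $\epsilont$ itself, which is exactly what makes condition~(a) fall out of the definition of $\overline{\varepsilon}$.
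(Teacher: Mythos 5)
Your proof is correct and follows essentially the same route as the paper's: induction on the type tree, using the induction hypothesis for the subgame-optimality condition and the Main lemma together with the definition of $\overline{\varepsilon}$ to verify the equation at the root. Your version is somewhat more explicit about identifying $\head(\Jsequence(\epsilont)(q))$ with $\varepsilon(\lambda x . q_x(\nu(x)))$, which the paper's terser proof leaves implicit.
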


\begin{proof} By induction on the tree $\Xt$. When $\Xt$ is the empty tree, the result is again immediate. If $\Xt = X \cons \Xf$ and $\epsilont = \varepsilon \cons \epsilonf$, recall that $\strategy(\epsilont, q) = x_0 \cons \sigmaf$ where
\begin{eqnarray*}
    x_0 & = & \varepsilon(\lambda x . q_x(\Jsequence (\epsilonf(x))(q_x))), \\
    \sigmaf(x) & = & \strategy(\epsilonf(x), q_x).
\end{eqnarray*}
By induction hypothesis we have that, for all $x : X$, the strategy $\sigmaf(x)$ is optimal for the subgame \[ (\Xf(x), R, q_x, \overline{\epsilonf(x)}) \] and since \[ x_0 = \varepsilon(\lambda x . q_x(\spath(\strategy(\epsilonf(x), q_x)))), \] by the Main lemma \ref{main-lemma}, we indeed have
\[ 
q(x_0 \cons \spath (\sigmaf(x_0))) = \overline{\varepsilon}(\lambda x . q(x \cons \spath (\sigmaf(x)))),
\]
which concludes the proof. \hfill $\Box$
\end{proof}

\begin{theorem}[Selection strategy theorem] If a selection function tree $\epsilont \colon \JJ \Xt$ attains a quantifier-free $\phit \colon \KK \Xt$ then, for any outcome function $q \colon \Path \Xt \to R$, \[ \strategy(\epsilont, q) \] is optimal for the game $(\Xt, R, q, \phit)$. 
\end{theorem}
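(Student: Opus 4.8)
The plan is to observe that this theorem follows almost immediately by combining two facts already established in the excerpt, so that it is really a corollary rather than a fresh induction. First, the Selection strategy lemma (Lemma~\ref{sel-strat-lemma}) tells us that for any selection function tree $\epsilont : \JJ \Xt$ and any outcome function $q$, the strategy $\strategy(\epsilont, q)$ is optimal for the game whose quantifier tree is the \emph{derived} one $\overline{\epsilont}$, namely $(\Xt, R, q, \overline{\epsilont})$. Second, the definition relating selection trees to quantifier trees records the equivalence that $\epsilont$ attains $\phit$ if and only if $\overline{\epsilont} = \phit$.

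Given these, the argument is a single substitution. Since by hypothesis $\epsilont$ attains $\phit$, the characterisation of attainment gives $\overline{\epsilont} = \phit$. Plugging this equality into the conclusion of Lemma~\ref{sel-strat-lemma}, the game $(\Xt, R, q, \overline{\epsilont})$ for which $\strategy(\epsilont, q)$ is already known to be optimal is precisely the game $(\Xt, R, q, \phit)$. Hence $\strategy(\epsilont, q)$ is optimal for $(\Xt, R, q, \phit)$, as required.

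I expect essentially no obstacle here: all the real work was done in Lemma~\ref{sel-strat-lemma}, whose inductive proof carries the Main lemma through the product of selection functions, and in the (easy) inductive proof of the equivalence $\epsilont \text{ attains } \phit \iff \overline{\epsilont} = \phit$. If one instead wanted a self-contained argument that avoids citing the equivalence, one could redo the induction on $\Xt$ directly, tracking at each node of the form $X \cons \Xf$ that $\varepsilon$ attaining $\phi$ lets us rewrite $\overline{\varepsilon}(p) = p(\varepsilon(p))$ as $\phi(p)$ in the optimality condition (a) of Definition~\ref{def:strategy}, with the subtrees handled by the induction hypothesis. But this merely re-derives the stated equivalence inline, so invoking it as a lemma and appealing to Lemma~\ref{sel-strat-lemma} is the cleaner route.
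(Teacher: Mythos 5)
Your proposal is correct and matches the paper's own proof exactly: the paper also deduces $\overline{\epsilont} = \phit$ from the attainment hypothesis and then applies Lemma~\ref{sel-strat-lemma} as an immediate consequence. Nothing further is needed.
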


\begin{proof} Since $\overline{\epsilont} = \phit$ when  $\epsilont$ attains $\phit$, the theorem is an immediate consequence of the Selection strategy lemma \ref{sel-strat-lemma}.
\end{proof}

\begin{example}[Tic-Tac-Toe optimal strategy] Since the quantifier tree $\phit$ in the Tic-Tac-Toe game $(\Xt, R, q, \phit)$ (Example \ref{def:ttt-game}) is attained by a selection function tree $\epsilont$ (Definition \ref{ttt-s-tree}), the theorem above shows that $\strategy(\epsilont, q)$ is optimal.
\end{example}

% %%%%%%%%%%%%%%%%%%%%%%%%%%%%%%%%%%%%%%%%%%%%%%%
% \section{Applications}
% \label{sec:applications}
% %%%%%%%%%%%%%%%%%%%%%%%%%%%%%%%%%%%%%%%%%%%%%%%

% Model suboptimal players by probabilistic strategies. Opponent plays an optimal mode with a given probability. How to play optimally against such opponents?

%% The Appendices part is started with the command \appendix;
%% appendix sections are then done as normal sections
% \appendix

% \section{Sample Appendix Section}
% \label{sec:sample:appendix}
% Lorem ipsum dolor sit amet, consectetur adipiscing elit, sed do eiusmod tempor section \ref{sec:sample1} incididunt ut labore et dolore magna aliqua. Ut enim ad minim veniam, quis nostrud exercitation ullamco laboris nisi ut aliquip ex ea commodo consequat. Duis aute irure dolor in reprehenderit in voluptate velit esse cillum dolore eu fugiat nulla pariatur. Excepteur sint occaecat cupidatat non proident, sunt in culpa qui officia deserunt mollit anim id est laborum.

%%%%%%%%%%%%%%%%%%%%%%%%%%%%%%%%%%%%%%%%%%%%%%%
%%%%%%%%%%%%%%%%%%%%%%%%%%%%%%%%%%%%%%%%%%%%%%%
\section{Discussion of dependent type trees}
\label{sec:discussion}
%%%%%%%%%%%%%%%%%%%%%%%%%%%%%%%%%%%%%%%%%%%%%%%
%%%%%%%%%%%%%%%%%%%%%%%%%%%%%%%%%%%%%%%%%%%%%%%

\newcommand{\A}{\Tree_A}
\newcommand{\isempty}[1]{{\rm is{-}empty}(#1)}
\newcommand{\prune}{\operatorname{prune}}

Aczel \cite{Aczel(1978)} considers a type $\A$ similar to our type $\Tree$, but without our leaf constructor $[]$, for the purpose of modelling CZF. The type $\A$ consists of well-founded type trees where leaves are of the form $X \cons \Xf$ with~$X$ empty and~$\Xf$ the unique function on~$X$. 
In our definition, nodes~$X$ with~$X$ empty are allowed in principle. 
However, as soon as there is a strategy for a type tree $\Xt : \Tree$, the internal nodes of~$\Xt$ must all be inhabited. 

It is not hard to see that the subtype of $\Tree$ consisting of the trees whose internal nodes are all inhabited is in bijection with the subtype of $\A$ consisting of trees whose internal nodes are all decidable (that is, either empty or inhabited). So we can work with the type $\A$ instead. But, in this case, we need to slightly modify all our constructions on trees. 
For example, with our definition of path, the type of paths over a tree~$\Xt : \Tree$ is empty if $\Xt$ has no leaves of the form $[]$. However, we can define a correct notion of path for Aczel trees by
\[
\Path (X \cons \Xf) = \isempty{X} + ((x : X) \times \Path (\Xf(x)). 
\]
This definition is correct in that paths in our original sense correspond to paths in this modified sense along the bijection discussed above. All other definitions have to be modified in a similar way, by treating the base case differently.
% % I prefer to omit this:
% In our view, for the purposes of game theory, it is more natural to work with our version of the type of trees, where the end of a game is explicitly indicated.

Another observation is that empty internal nodes can be pruned away with an idempotent map $\prune : \Tree \to \Tree$ such that any tree $\Xt$ has the same paths as the tree $\prune(\Xt)$, and this can be done without assuming that the internal nodes of $\Xt$ are decidable. Moreover, as soon as there is at least one path in the tree $\Xt$, all internal nodes of tree $\prune(\Xt)$ are inhabited.

An extended version of this discussion, with full technical details, is formalised in Agda (see the next section).

%%%%%%%%%%%%%%%%%%%%%%%%%%%%%%%%%%%%%%%%%%%%%%%
%%%%%%%%%%%%%%%%%%%%%%%%%%%%%%%%%%%%%%%%%%%%%%%
\section{Proofs and Programs in Agda}
\label{sec:formalisation}
%%%%%%%%%%%%%%%%%%%%%%%%%%%%%%%%%%%%%%%%%%%%%%%
%%%%%%%%%%%%%%%%%%%%%%%%%%%%%%%%%%%%%%%%%%%%%%%

A second contribution of the present work is that all the definitions, constructions, theorems, lemmas, proofs and examples presented above have been implemented and computer-verified in Agda \cite{TypeTopologyGames}. 

As Agda adopts a constructive foundation, the function $\strategy(\epsilont, q)$, which defines strategies from selection trees~$\epsilont$ and outcome functions~$q$, is automatically a computer program that can be used to calculate optimal strategies in concrete games. In particular, we also implemented Tic-Tac-Toe as outlined in the above examples.

The Agda development discussed above \cite{TypeTopologyGames} is organised in the following files:

\begin{enumerate}
    \item \emph{TypeTrees.} This defines dependent type trees and basic concepts for them, including the general definition of adding ``structure'' to a type tree.
    \item \emph{FiniteHistoryDependent}. This implements this paper, excluding the examples. We took care of presenting the material in the same order as in this paper, and of using the same labels for theorems, definitions, etc.
    \item There are three \emph{TicTacToe} files, which implement our running example in different ways. The first one is the shortest and most transparent, but also the most inefficient, while the other two try to sacrifice clarity for efficiency. 
    \item \emph{Constructor}. This gives an alternative construction of games, not discussed in this paper, and is used in the more efficient versions of Tic-Tac-Toe.
    \item \emph{Discussion}. This formalises an extended version of the discussion of Section~\ref{sec:discussion}.
\end{enumerate}

% Ohad Kammar has translated our Agda implementation described above into the dependently-typed programming language Idris \cite{idris}, including a user interface for playing Tic-Tac-Toe against a computer following the optimal strategy.

% Unfortunately, our Agda implementation of Tic-Tac-Toe is not very efficient and the calculation of its optimal outcome takes almost one hour in a fast machine. However, Ohad Kammar has optimised and translated it into the dependently-typed programming language Idris \cite{idris}. In Kammar's implementation \cite{KammarGames}, the calculation of optimal play takes only a few seconds. Kammar has also implemented a user interface for playing Tic-Tac-Toe against a computer following the optimal strategy.

\section{Further work}

The work presented here is part of a project to extend the theory of higher-order games so as to model different types of players and strategies. Our aim is to make use of the combination of the selection monad with other monads, as in \cite{EO(2017)}, leading to games and strategies parametrised by a monadic structure. In particular, by making use of a probability monad, we intend to model suboptimal or ``irrational'' players in future work.

\section*{Acknowledgements}

We thank the anonymous referee for a number of helpful comments and questions.
We would also like to thank Ohad Kammar for discussions. He also implemented~\cite{KammarGames} part of the above in the dependently typed programming language Idris~\cite{JFP:9060502}.

% Perhaps say something about (1) alpha-beta prunning, and (2) rational players playing against irrational players.

%% else use the following coding to input the bibitems directly in the
%% TeX file.

% \begin{thebibliography}{00}

% %% \bibitem{label}
% %% Text of bibliographic item

% \bibitem{}

% \section{TODO}

% \begin{enumerate}
%     \item Write the introduction section
%     \item Write conclusion section
%     \item Polish narrative around definitions and theorems
%     \item Include all references including repos
% \end{enumerate}

% \end{thebibliography}

%% If you have bibdatabase file and want bibtex to generate the
%% bibitems, please use
%%
 \bibliographystyle{plain} %elsarticle-num} 
 \bibliography{references}

\end{document}